
\documentclass[conference,letterpaper]{IEEEtran}

\addtolength{\topmargin}{9mm}

%
%
\usepackage[utf8]{inputenc} 
\usepackage[T1]{fontenc}
\usepackage{url}
\usepackage{ifthen}
\usepackage{cite}
\usepackage{graphicx}
\usepackage{amssymb}
\usepackage[cmex10]{amsmath} 
\usepackage{xcolor}


\interdisplaylinepenalty=2500 
\usepackage {hyperref}
\usepackage{amsthm}

\newtheorem{definition}{Definition}[]

\newtheorem{thm}{Theorem}[]
\newtheorem{lem}[thm]{Lemma}
\newtheorem{prop}[thm]{Proposition}
\newtheorem{cor}[thm]{Corollary}

\newtheorem{rem}{Remark}


\makeatletter
\makeatother
\hyphenation{op-tical net-works semi-conduc-tor}

\begin{document}
\title{Bounds on the Statistical Leakage-Resilience of Shamir's Secret Sharing} 


\author{%
  \IEEEauthorblockN{Utkarsh Gupta and Hessam Mahdavifar}
  \IEEEauthorblockA{Department of Electrical and Computer Engineering\\
                    Northeastern University\\
                    Boston, MA, USA\\
                    Email: \{gupta.utk, h.mahdavifar\}@northeastern.edu}
}


\maketitle


\begin{abstract}
Secret sharing is an instrumental tool for sharing secret keys in distributed systems. In a classical threshold setting, this involves a dealer who has a secret/key, a set of parties/users to which shares of the secret are sent, and a threshold on the number of users whose presence is needed in order to recover the secret. In secret sharing, secure links with no leakage are often assumed between the involved parties. However, when the users are nodes in a communication network and all the links are physical links, e.g., wireless, such assumptions are not valid anymore. In order to study this critical problem, we propose a \textit{statistical leakage} model of secret sharing, where some noisy versions of all the secret shares might be independently leaked to an adversary. We then study the resilience of the seminal Shamir's secret sharing scheme with \textit{statistical leakage}, and bound certain measures of security (i.e., semantic security, mutual information security), given other parameters of the system including the amount of leakage from each secret share. We show that for an \textit{extreme scenario} of Shamir's scheme, in particular when the underlying field characteristic is $2$, the security of each bit of the secret against leakage improves exponentially with the number of users. To the best of our knowledge, this is the first attempt towards understanding secret sharing under general statistical noisy leakage.

\end{abstract}

\section{Introduction}
Secret sharing, introduced by Shamir \cite{shamir1979share}, and Blakey \cite{blakley1979safeguarding}, is a fundamental cryptographic primitive central to security in many distributed systems. Secret sharing protects against collusion by allowing a secret to be shared among parties/users in such a way that only some selected subsets of users can recover the secret by aggregating their shares together. Such schemes have found several applications, such as in multi-party computation\cite{goyal2022multi, patel2016secure, chaum1988multiparty, wigderson1988completeness}, zero-knowledge proofs \cite{vaikuntanathan2015secret, bendlin2010threshold}, threshold cryptographic systems \cite{shoup2000practical, desmedt1992threshold, desmedt1991shared}, access control \cite{naor1996access}, generalized oblivious transfer \cite{tassa2011generalized, shankar2008alternative}, and others. Secret sharing is widely used, yet it is still not fully understood how communication constraints such as leakage, noise, and scalability impact the security and reliability of such schemes.

The massive deployment of communication networks makes information security unprecedentedly important \cite{zou2016survey}. Anyone within the geographical range of a communication network can potentially gain unauthorized access to the data transmitted over the physical links. Protocols involving secret sharing often assume that the dealer has a reliable and secure communication channel to all the parties (users in a wireless network), i.e. links with no leakages. However, real-world implementations are susceptible to side-channel attacks which may lead to the adversary gaining some information about the secret shares of the non-colluding (honest) users. 

Motivated by the emergence of such attacks, protocols which attempt to provide provable security guarantees against information leakage, have attracted a lot of attention in the cryptographic community (survey \cite{kalai2019survey}). In the context of secret sharing, wireless leakage attacks allow an adversary to obtain some \textit{bounded} leakage from the secret shares of honest parties. Such leakage may help the adversary reconstruct the secret. In the past few years, substantial research has examined the feasibility and efficiency of leakage-resilient secret sharing against diverse models of potential leakages \cite{hoffmann2023stronger, klein2023new, maji2022leakage, maji2022tight, adams2021lower, benhamouda2021local, maji2021constructing, maji2021leakage, nielsen2020lower, benhamouda2018on}. 

Initiated by the fundamental work of Wyner on wiretap channels \cite{wyner1975wire}, developing information-theoretic methods for secure communications has been an active area of research \cite{angueira2022survey, liu2016physical, mukherjee2014principles}. The wiretap channel is a model of information leakage used to study physical layer security in wireless communications. In this work (Section~\ref{sec: Leakage Model}), we propose a statistical leakage model for secret sharing schemes where the adversary leaks information from all secret shares through independent wiretap channels. Subsequently, in Section~\ref{sec: Security_Analysis}, we study the resilience of Shamir's secret sharing scheme with statistical leakage. For the Shamir's scheme with a general threshold, ShamirSS$(N,t)$, we show that the mutual information leaked from the secret is less than the sum of mutual information leaked by the shares individually. Then, for an extreme scenario of the Shamir's scheme, ShamirSS$(N,N)$, in particular when the underlying field characteristic is $2$; we show that the mutual information leaked from each bit of the secret reduces exponentially with the number of users. 

The rest of this paper is organized as follows. We discuss the preliminaries to the Shamir's secret sharing protocol, and recall the relationship among information-security metrics for the wiretap channel in Section~\ref{sec:Preliminaries}. Then, we describe the proposed statistical leakage model and discuss some related leakage models in Section~\ref{sec: Leakage Model}. In Section~\ref{sec: Security_Analysis} we present results on the leakage resilience of Shamir's secret sharing under the proposed leakage model. Finally, the paper is concluded in Section ~\ref{sec:conc}.

\section{Preliminaries}
\label{sec:Preliminaries}
In this section, we first review the Shamir's secret sharing scheme \cite{shamir1979share, blakley1979safeguarding}. The scheme provably leaks no information from the secret even when an adversary has access to secret shares of some (less than a threshhold $t$) users. Subsequently, we discuss the wiretap channel \cite{wyner1975wire}; and the relationship among mutual information security (MIS), semantic security (SS), and distinguishing security (DS) metrics for this channel \cite{bellare2012semantic}. The new leakage model proposed in Section~\ref{sec: Leakage Model} is inspired by the wiretap channel, and the relationship between the metrics is used to provide security guarantees in Section~\ref{sec: Security_Analysis}.  


\subsection{Shamir's Secret Sharing Scheme}
\label{subsec:ShamirSS}
Given a secret $s \in \mathbb{F}_q$, Shamir's secret sharing scheme aims to distribute secret shares $s_1,s_2, \ldots, s_N \in \mathbb{F}_q$ among $N$ users in such a way that
\begin{itemize}
    \item the secret $s$ can be reconstructed given any $t$ (threshold) or more secret shares;
    \item the knowledge of less than $t$ secret shares does not reveal any information about the secret.
\end{itemize}

Such a scheme with $N$ users and threshold $t$, denoted as ShamirSS$(N,t)$, achieves this by constructing a $(t-1)$-degree random polynomial $P(x) \in \mathbb{F}_q[x]$ written as
\begin{equation}
    P(x) = s+p_1x + \cdots + p_{t-1}x^{t-1}
\end{equation}
where the coefficients of the polynomial $p_1,\cdots , p_{t-1}$ are i.i.d. variables selected uniformly at random from $\mathbb{F}_q$. Once the polynomial is constructed, the secret share $s_i$ of user $i \in [N]$ is the evaluation of the polynomial $P(x)$ at distinct predetermined values $\gamma_i \in \mathbb{F}_q$, i.e., $s_i = P(\gamma_i)$. The coefficients of $P(x)$ are determined uniquely by its evaluation on $t$ distinct points, i.e., given at least $t$ of the evaluation points $\{\gamma_1, \cdots ,\gamma_N \}$; and their corresponding $t$ secret shares. Then $s = P(0)$ is reconstructed. Using Lagrange's polynomial interpolation formula, given $t$ evaluation points $\gamma_{e_1}, \ldots \gamma_{e_t}$;
\begin{equation}\label{eqn: ShamirSS(N,t) coefficients}
    P(0) = (-1)^{t-1}\sum_{i=1}^{t} \frac{\prod_{j =1, j \neq i}^{t} \gamma_{e_j}}{\prod_{j =1, j \neq i}^{t} (\gamma_{e_i} - \gamma_{e_j})} P(\gamma_{e_i}).
\end{equation}

\begin{rem} [\cite{guruswami2016repairing}]
    This is equivalent to saying that the secret can be represented as a unique linear combination of any $t$ secret shares.
\end{rem}

\subsection{The Wiretap Channel and Security Metrics}\label{subsec: wiretap_channel}

The wiretap channel, introduced by Wyner \cite{wyner1975wire}, is a pair of communication channels $(\text{ChR, ChA})$ with the same input and output sets where $\text{ChR}$ is a channel from the sender to a receiver, and $\text{ChA}$ is the wiretap that goes from the sender to an adversary. The setting aims to analyze the security of the communicated data based solely on the assumption that the channel from sender to adversary is “noisier” than the channel from sender to receiver; the adversary considered is computationally unbounded. In their seminal work \cite{bellare2012semantic}, Bellare et. al. show that security of the communicated data does not depend on the receiver channel and establish a relationship between the classical notions of semantic security (SS), distinguishing security (DS), and mutual information security (MIS). Let $M$ denote the message being sent, and let $\text{ChA}: \{0,1\}^m \to \{0,1\}^l$ be the adversary channel.

\begin{definition}[Mutual information security, \cite{bellare2012semantic}]\label{def: mutual_information_security}
     The amount of information revealed about the message $M$ can be measured using the MIS-security metric $\eta_{\text{MIS}}$ defined as
     \begin{equation}
         \eta_{\text{MIS}} = \max_{P_M} I \left(M; \text{ChA}(M) \right),
     \end{equation}
     where $P_M$ is the probability mass function of the message $M$.
\end{definition}



\begin{thm}[\cite{bellare2012semantic}, Theorem 1,5] \label{thm: inequality_equivalence}  
For the wiretap channel, semantic security $\eta_{\text{SS}}$, distinguishing security $\eta_{\text{DS}}$, and mutual-information security $\eta_{\text{MIS}}$ satisfy the following inequality
\begin{equation}
\eta_{\text{SS}} \le \eta_{\text{DS}} \le  \sqrt{2\eta_{\text{MIS}}}.   
\end{equation}
\end{thm}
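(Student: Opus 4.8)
The plan is to establish the two inequalities separately, since $\eta_{\text{SS}} \le \eta_{\text{DS}}$ and $\eta_{\text{DS}} \le \sqrt{2\eta_{\text{MIS}}}$ rely on quite different tools: the first is a reduction between two adversarial games, while the second is an information-theoretic estimate built on Pinsker's inequality. This matches the fact that the cited source establishes them as two separate theorems.

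For $\eta_{\text{SS}} \le \eta_{\text{DS}}$, I would argue by reduction. From any semantic-security adversary---one that, given the wiretap output $\text{ChA}(M)$, tries to predict a function $f(M)$ with advantage over the best message-only predictor---I build a distinguisher in the DS game. The key observation is that the advantage a semantic adversary gains from seeing the channel output is due entirely to how the distribution of $\text{ChA}(m)$ varies with the message $m$, since a simulator with no output can always match the a priori optimal guess of $f(M)$. Selecting the pair of messages on which the predictor's conditional success probabilities differ most yields a DS distinguisher whose advantage dominates the semantic-security advantage, which after maximizing over functions and message distributions gives $\eta_{\text{SS}} \le \eta_{\text{DS}}$.

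For $\eta_{\text{DS}} \le \sqrt{2\eta_{\text{MIS}}}$, I would first write $\eta_{\text{DS}} = \max_{m_0,m_1}\|P_{\text{ChA}(m_0)} - P_{\text{ChA}(m_1)}\|_{\text{TV}}$, the largest total-variation distance between two output distributions, and fix the optimizing pair, setting $P_0 := P_{\text{ChA}(m_0)}$, $P_1 := P_{\text{ChA}(m_1)}$, and $\bar P := \tfrac12(P_0+P_1)$. Because $\eta_{\text{MIS}}$ is a maximum over all input distributions $P_M$, it is lower bounded by its value at the uniform distribution on $\{m_0,m_1\}$, for which $I(M;\text{ChA}(M))$ equals the Jensen--Shannon divergence $\tfrac12 D(P_0\|\bar P) + \tfrac12 D(P_1\|\bar P)$. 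Applying Pinsker's inequality $D(P\|Q) \ge 2\|P-Q\|_{\text{TV}}^2$ to each term and using $\|P_i - \bar P\|_{\text{TV}} = \tfrac12\|P_0-P_1\|_{\text{TV}}$ collapses the bound to $I(M;\text{ChA}(M)) \ge \tfrac12 \|P_0-P_1\|_{\text{TV}}^2 = \tfrac12\,\eta_{\text{DS}}^2$, and rearranging yields $\eta_{\text{DS}} \le \sqrt{2\eta_{\text{MIS}}}$.

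The main obstacle I anticipate is not the Pinsker step, which is mechanical once the units (nats) and the normalization of total-variation distance are fixed, but the SS-to-DS reduction: I must formalize the semantic-security game precisely---the function family, the message distribution, and the optimal message-only simulator---so that the reduction is tight and the maximizations on both sides line up. Care is also needed so that the constant in front of $\eta_{\text{MIS}}$ comes out exactly as $\sqrt 2$, since it depends on whether the metrics are normalized by total-variation distance or by twice that quantity.
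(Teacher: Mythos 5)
The paper does not prove this theorem at all: it is imported verbatim from Bellare--Tessaro--Vardy \cite{bellare2012semantic} (their Theorems 1 and 5), so there is no in-paper proof to compare against. Your proposal correctly reconstructs the arguments of that source: the $\eta_{\text{DS}} \le \sqrt{2\eta_{\text{MIS}}}$ step via restricting to the uniform distribution on the optimizing pair $\{m_0,m_1\}$, identifying $I(M;\text{ChA}(M))$ with the Jensen--Shannon divergence, and applying Pinsker's inequality is exactly their route (and the constant $\sqrt{2}$ is safe in either nats or bits, since divergence in bits only exceeds divergence in nats); the $\eta_{\text{SS}} \le \eta_{\text{DS}}$ reduction is also the right idea, with the one formalization you should make explicit being the decoupling step --- bound $\Pr[A(\text{ChA}(M))=f(M)]$ by $\Pr[A(\text{ChA}(M'))=f(M)]$ for an independent copy $M'$ of $M$, the gap being at most $\mathbb{E}_{M,M'}\bigl[\mathrm{TV}(\text{ChA}(M),\text{ChA}(M'))\bigr] \le \eta_{\text{DS}}$, and then observe that $A(\text{ChA}(M'))$ is a valid message-only simulator.
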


We refer the reader to \cite{bellare2012semantic} for the exact formulation of semantic security and distinguishing security, which are all statistical measures of security.

\section{Statistical Leakage Model}\label{sec: Leakage Model}
In this section, we formally introduce the problem of statistical leakage in secret sharing schemes. Leakage resilient secret sharing aims to guarantee the security of the secret, even if the adversary obtains partial information from the secret shares of all honest parties. In the proposed model, all honest parties leak information to the adversary independently through wiretap channels (Section~\ref{subsec: wiretap_channel}), while the adversary has the complete knowledge of the secret shares of the colluding users. Later in this section, we present some other leakage models which have been studied in cyrptography literature.  

\subsection{System Model}
We study the leakage resilience of secret sharing schemes for communication systems. The wiretap setting (Section~\ref{subsec: wiretap_channel}) emerges naturally as a model of information leakage for analyzing the information-theoretic security of such schemes. Consider a secret sharing scheme ith $N$ users where the dealer shares the secret $s$, which is sampled from a random variable $S$. The leakage we consider has the following properties
\begin{itemize}
    \item some information can be leaked from every user;
    \item the information leaked to the adversary from each user is independent of the leakage from other users.
\end{itemize}
\vspace{-3mm}
\begin{figure}[htbp]
\centering
{\includegraphics[width=0.48\textwidth]{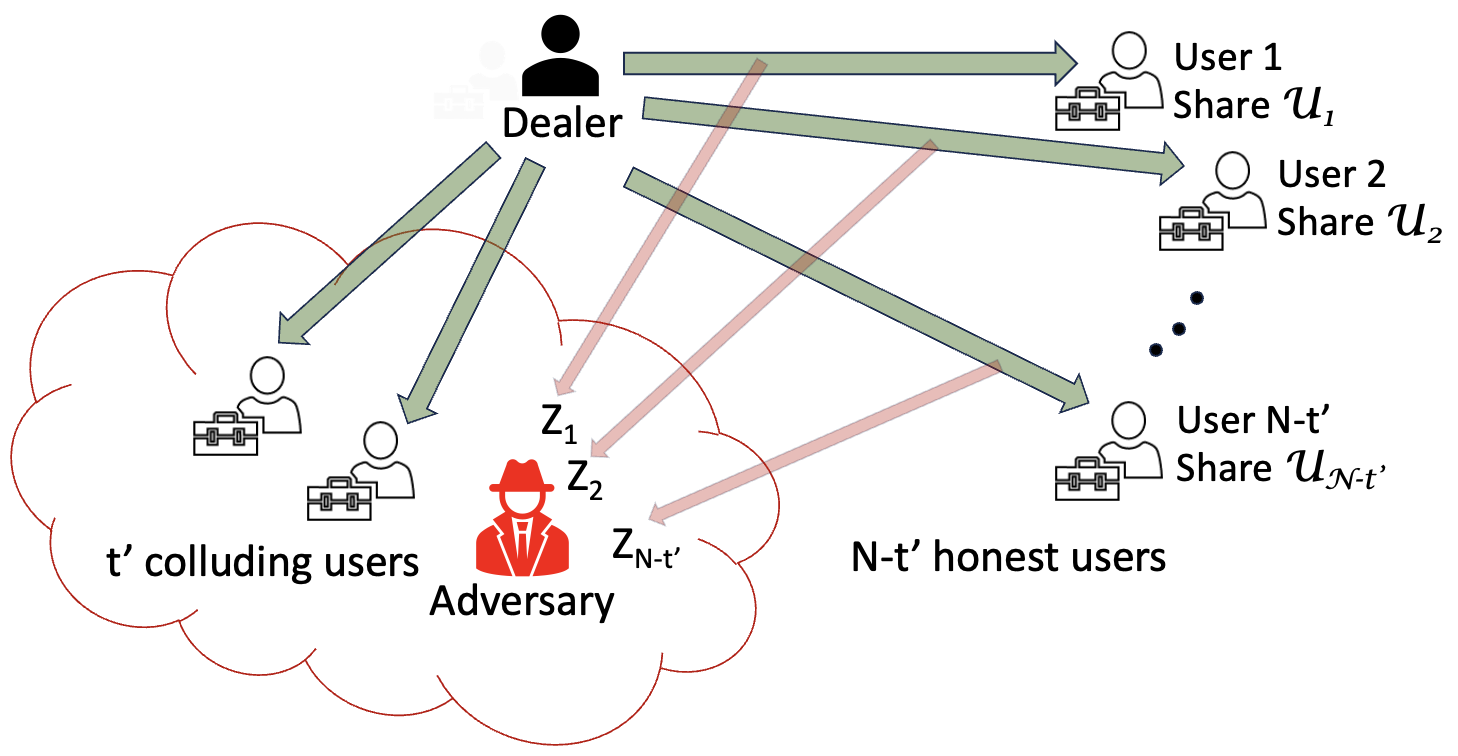}} \vspace{-3mm}
  \caption{Secret sharing with statistical leakage}\vspace{-2mm}
  \label{fig:approxifer}
\end{figure}

The links between the dealer and users are assumed to be perfect channels while the adversary channels have information capacity constraints. Framing this model mathematically, let the random vector corresponding to the $N$ secret shares be $\mathbb{\mathcal{U}} =(\mathcal{U}_1,\mathcal{U}_2, \ldots, \mathcal{U}_N)$, and let their leaked versions be the random vector $\mathbf{Z} = (Z_1,Z_2,\ldots,Z_N)$. Let $l$ denote the bit-length of the secret and secret shares, assuming they belong to a field of characteristic $2$ and represented as vectors over the binary field. Then the proposed model is characterized by the following two assumptions:
\begin{itemize}
\item \textbf{Leakage.} $I(\mathcal{U}_j;\mathbf{Z})/l = 1$ for the colluding parties, and $I(\mathcal{U}_j;\mathbf{Z})/l \le \epsilon_j$ for the honest parties.
\item \textbf{Independence assumption.} The random variables $Z_i/\mathcal{U}_i$ i.e. the leakage channels are mutually independent.
\end{itemize}


\subsection{Related Prior Work}
Leakage resilient secret sharing has garnered a lot of attention in the cryptographic community in recent years \cite{hoffmann2023stronger, klein2023new, maji2022leakage, maji2022tight, adams2021lower, benhamouda2021local, maji2021constructing, maji2021leakage, nielsen2020lower, benhamouda2018on}. Guruswami and Wootters' reconstruction algorithm \cite{guruswami2017repairing, guruswami2016repairing} for Reed Solomon codes showed that even a single bit leaked from each secret share compromises the security of Shamir's scheme over fields of characteristic $2$. This led Benhamouda et al. \cite{benhamouda2018on, benhamouda2021local} to investigate the leakage resilience of linear secret sharing schemes for other finite fields. They prove that Shamir’s secret sharing scheme is leakage-resilient against one bit leakages, when the underlying field is of a large prime order, and the reconstruction thresholds is at least $0.92$ times the number of parties. This threshold has been progressively refined to $0.69$ in a series of works by different authors \cite{klein2023new, maji2022improved, maji2021constructing, benhamouda2021local}.
 
Various works have tried to assess the resilience of secret sharing against diverse models of information leakage. The works of Benhamouda et al. \cite{benhamouda2018on}, and Nielsen and Simkin \cite{nielsen2020lower}, assume each secret share leaks information independently through arbitrary leakage functions with bounded output-length. Maji et. al. in \cite{maji2021leakage}, and Adams et. al. in \cite{adams2021lower}, consider probing attacks which leak physical-bits from the memory hardware storing the secret shares. In a separate work \cite{maji2022leakage}, Maji et al. consider joint leakage, where the adversary can leak any bounded output-length joint function of the shares. 

A majority of these contributions consider an adversary who is able to obtain the precise outputs of the leakage functions computed on secret shares. However, in practice, side-channel attacks are inherently noisy and there are practical techniques that can amplify this noise. Adams et al. \cite{adams2021lower} consider noisy physical-bit leakage, where each physical-bit leakage is replaced by noise with some fixed probability. In \cite{hoffmann2023stronger}, Hoffmann et. al. consider a leakage model where some random subset of the leakages is replaced by uniformly random noise. In this work, we aim at providing a general framework for secret sharing under statistical leakage from a communication/information-theoretic perspective.  



\section{Bounds on Securi ty}\label{sec: Security_Analysis}
In this section, we study the resilience of Shamir's secret sharing under the proposed statistical leakage model. For the Shamir's scheme with a general threshold, ShamirSS$(N,t)$, we show that the mutual information leaked from the secret is less than the sum of mutual information leaked by the shares individually. Then, for an extreme scenario of the Shamir's scheme, ShamirSS$(N,N)$, we show that the mutual information leaked from each bit of the secret reduces exponentially with the number of users. Let the secret being shared be denoted by the random variable $S$. For the ShamirSS$(N,N)$ scheme, we give bounds for two different cases where  
\begin{itemize}
    \item \textbf{Case 1.} the secret $S$ is uniformly distributed but the leakage channel is arbitrary (Proposition~\ref{prop: security_MGL}); and
    \item \textbf{Case 2.} the secret $S$ is arbitrarily distributed but the leakage channels are independent binary symmetric channels (BSCs) (Proposition~\ref{prop: security_lemma}).
\end{itemize}
The relationship between mutual information security (MIS), and other well-known notions of security in the literature i.e. semantic security(SS), and distinguishing security (DS) (Theorem~\ref{thm: inequality_equivalence}) is used in Corollary~\ref{cor: security_metrics}. 

\subsection{Shamir's Scheme with General Threshold}

We begin by showing that the analysis of the security guarantees for the Shamir's secret sharing scheme can be reduced to the scenario with no colluding users.

\begin{lem}\label{lemma: no_collusion}
$\text{ShamirSS}(N,t)$ scheme with $t'<t$ colluding users, under the statistical leakage model introduced in Section\,\ref{sec: Leakage Model}, is  as secure as the $\text{ShamirSS}(N-t',t-t')$ scheme with no colluding users. 
\end{lem}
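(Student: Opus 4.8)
The plan is to condition on the colluding shares and show that what remains is, up to an affine relabeling, a fresh instance of $\text{ShamirSS}(N-t',t-t')$ carrying exactly the same leakage. Without loss of generality let the colluding users be $\{1,\dots,t'\}$ with evaluation points $\gamma_1,\dots,\gamma_{t'}$, and let $\mathcal{H}=\{t'+1,\dots,N\}$ (so $|\mathcal{H}|=N-t'$) index the honest users. The adversary's view is the pair $\big(\{s_j\}_{j\le t'},\{Z_i\}_{i\in\mathcal{H}}\big)$. First I would invoke the standard privacy guarantee of Shamir's scheme: any $t'<t$ shares are jointly independent of the secret, so $I\big(S;\{s_j\}_{j\le t'}\big)=0$. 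By the chain rule this collapses the quantity of interest to the conditional leakage $I\big(S;\{Z_i\}_{i\in\mathcal{H}}\,\big|\,\{s_j\}_{j\le t'}\big)$, i.e. we may fix the colluding shares to arbitrary values $c_1,\dots,c_{t'}$ and analyze only the residual randomness.

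The heart of the argument is an algebraic decomposition of the sharing polynomial. Writing $V(x)=\prod_{j=1}^{t'}(x-\gamma_j)$ (degree $t'$, nonzero at $0$ and at every $\gamma_i$ with $i\in\mathcal{H}$) and letting $L(x)$ be the unique polynomial of degree $<t'$ interpolating the fixed shares $L(\gamma_j)=c_j$, every degree-$(t-1)$ polynomial consistent with these constraints factors uniquely as $P(x)=L(x)+V(x)Q(x)$ with $\deg Q\le t-t'-1$. This correspondence $P\leftrightarrow Q$ is a bijection, so conditioned on the colluding shares the uniform randomness of $P$ transfers to uniform randomness of $Q$ over polynomials of degree at most $t-t'-1$. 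Evaluating at $x=0$ gives $S=L(0)+V(0)Q(0)$, whence the reduced secret $S':=Q(0)=\big(S-L(0)\big)/V(0)$ is an affine bijection of $S$; conditioned on $S'$, the polynomial $Q$ is uniform of degree at most $t-t'-1$ with prescribed constant term, which is precisely a $\text{ShamirSS}(N-t',t-t')$ encoding of $S'$ on the evaluation points $\{\gamma_i\}_{i\in\mathcal{H}}$.

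It remains to transport the leakage. For each honest user, $s_i=L(\gamma_i)+V(\gamma_i)Q(\gamma_i)$ with $V(\gamma_i)\ne 0$, so the reduced share $s_i':=Q(\gamma_i)$ is an affine bijection of the original share $s_i$. Since mutual information is invariant under deterministic bijections applied to either argument, the leakage channel $\mathcal{U}_i\to Z_i$ satisfies $I(\mathcal{U}_i';Z_i)=I(\mathcal{U}_i;Z_i)$, so every per-user constraint $\epsilon_i$ and the mutual independence of the channels carry over unchanged. Applying bijection-invariance once more to $S\leftrightarrow S'$ yields $I\big(S;\{Z_i\}_{i\in\mathcal{H}}\,\big|\,\{s_j\}_{j\le t'}=\mathbf{c}\big)=I\big(S';\{Z_i\}_{i\in\mathcal{H}}\big)$ for every fixed $\mathbf{c}$, and averaging over $\mathbf{c}$ gives equality of the leakage with the colluding-free $\text{ShamirSS}(N-t',t-t')$ scheme. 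As this holds for every secret distribution and the $\epsilon_i$ are preserved, the two schemes share the same MIS value, and hence by Theorem~\ref{thm: inequality_equivalence} the same guarantees on $\eta_{\text{SS}}$ and $\eta_{\text{DS}}$.

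I expect the main obstacle to be the algebraic step in the second paragraph: verifying that the factorization $P=L+VQ$ is genuinely a bijection onto degree-$\le(t-t'-1)$ polynomials and that conditioning preserves uniformity, together with the bookkeeping needed to confirm that the affine map relating $S$ to $S'$ depends only on the fixed colluding shares $\mathbf{c}$ and not on the honest randomness, so that mutual-information invariance applies cleanly for each fixed $\mathbf{c}$ before averaging.
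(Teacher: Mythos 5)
Your proposal is correct, and its algebraic core is the same as the paper's: factoring every consistent sharing polynomial as $P = L + VQ$ with $V(x)=\prod_{j\le t'}(x-\gamma_j)$ is exactly the paper's reduction, which phrases the same step in quotient-ring language (the paper's monic interpolant $h(x)$ is your $L(x)+V(x)$, and passing to $\left(\mathbb{F}_q[x]/(x^t)\right)/(h(x))$ is your bijection $P\leftrightarrow Q$). Where the two differ is in what gets proved after that step. The paper stops at an equivalence of deterministic reconstruction problems and implicitly infers that the two schemes are equally secure under leakage; it never mentions the secret's distribution, the leakage channels, or mutual information. You carry out precisely that missing half: the privacy-plus-chain-rule collapse to $I\bigl(S;\{Z_i\}_{i\in\mathcal{H}}\mid\{s_j\}_{j\le t'}\bigr)$, the observation that conditioning preserves uniformity of $Q$, the affine bijections $S\leftrightarrow S'$ and $s_i\leftrightarrow s_i'$, and the invariance of mutual information under these bijections, which is what transports the per-user constraints $\epsilon_i$ and channel independence to the reduced scheme. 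The paper's version buys brevity and an abstract CRT viewpoint; yours buys a self-contained proof of the actual security statement, since ``as secure as'' is an information-theoretic claim that the algebraic equivalence alone does not formally deliver. One minor point to tighten: after conditioning, the reduced scheme's effective channel (bijection composed with the original channel) depends on the fixed colluding values $\mathbf{c}$, so averaging over $\mathbf{c}$ gives equality with a $\mathbf{c}$-indexed family of $\text{ShamirSS}(N-t',t-t')$ instances, each satisfying the same $\epsilon_i$ constraints --- which is all the lemma needs, but is worth stating as an upper bound over that family rather than equality with a single scheme.
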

\begin{proof}
    The problem of recovering the secret in $\text{ShamirSS}(N,t)$ is equivalent to the problem of recovering the corresponding $(t-1)$-degree polynomial from its evaluation on $N$ distinct points in the space $\mathbb{F}_q[x]$. The problem is equivalent to the Chinese Remainder Theorem for rings and reduces to finding the unique element in the ring $\mathbb{F}_q[x]/(x^t)$ which satisfies the given $N$ evaluation points. Using the shares of the colluding users, we can interpolate and find the unique monic polynomial $h(x)$ of degree $t'$ such that it satisfies the interpolation points of the colluding users. The problem is therefore reduced from finding unique element in the ring $\mathbb{F}_q[x]/x^t$, to finding a unique element in the ring $\left(\mathbb{F}_q[x]/(x^t) \right)/(h(x))$ satisfying the $N-t'$ evaluation points. This is equivalent to finding a unique element in the ring $\mathbb{F}_q[x]/(x^{t-t'})$ satisfying $N-t'$ evaluation points which in turn is equivalent to the $\text{ShamirSS}(N-t',t-t')$ scheme. 
\end{proof}

\begin{cor}
    The security analysis of Shamir's scheme with $t'$ colluding users can be reduced to a scheme with no collusion.
\end{cor}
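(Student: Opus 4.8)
The plan is to derive the corollary directly from Lemma~\ref{lemma: no_collusion}, treating it as the operational restatement of the structural equivalence established there. First I would fix the security metric of interest — say the mutual information security $\eta_{\text{MIS}}$ from Definition~\ref{def: mutual_information_security} — and make precise what ``as secure as'' means: two schemes are equally secure if the relevant metric, evaluated on the adversary's total view, takes the same value. In $\text{ShamirSS}(N,t)$ with $t'$ colluding users the adversary's view consists of the exact shares of the $t'$ colluders together with the leaked versions $Z_i$ of the $N-t'$ honest shares.

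Next I would invoke the ring-theoretic reduction of Lemma~\ref{lemma: no_collusion} to exhibit an explicit correspondence between this view and the adversary's view in the no-collusion scheme $\text{ShamirSS}(N-t',t-t')$. Conditioning on the deterministic, adversary-known colluding shares collapses the reconstruction problem to the quotient ring $\mathbb{F}_q[x]/(x^{t-t'})$, so the residual uncertainty about the secret is governed entirely by the $N-t'$ honest leakages. I would then argue that the joint distribution of the secret and these honest leakages is unchanged by the conditioning, whence any information-theoretic security metric — and therefore, via Theorem~\ref{thm: inequality_equivalence}, the semantic and distinguishing security as well — takes the same value in both settings.

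The main obstacle, and the only point requiring genuine care, is verifying that the statistical leakage model of Section~\ref{sec: Leakage Model} is preserved under the reduction: the independence assumption on the channels $Z_i/\mathcal{U}_i$ and the per-share leakage bounds $\epsilon_j$ must carry over unchanged to the honest shares of the reduced scheme. Since the colluding shares are known exactly and the leakage channels are mutually independent, conditioning on them leaves the marginal leakage of every honest share intact, so the hypotheses transfer verbatim. With the model preserved and the joint distribution matched, the corollary follows, justifying that all subsequent bounds may be proved for the no-collusion case without loss of generality.
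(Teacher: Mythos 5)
Your proposal is correct and follows essentially the same route as the paper: the paper states this corollary with no separate proof, treating it as an immediate restatement of Lemma~\ref{lemma: no_collusion}, which is exactly the lemma you invoke. Your additional care in fixing a metric, matching the adversary's view, and verifying that the channel-independence and per-share leakage bounds survive the conditioning on the colluders' shares simply makes explicit what the paper leaves implicit.
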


\begin{prop}\label{prop: general_security}
     The mutual information leakage in ShamirSS$(N, t)$ with $t'$ colluding users is bounded as
    \begin{equation}
        \frac{I(S; \mathbf{Z})}{l} \le \hspace{1mm} \sum_{i=1}^{N-t'} \epsilon_i,
    \end{equation}
    where $\epsilon_i$'s are the constants characterizing the per bit information leaked from the honest secret shares i.e., $I(\mathcal{U}_i,Z_i)/l \le \epsilon_i$.
\end{prop}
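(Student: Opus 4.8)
The plan is to first apply Lemma~\ref{lemma: no_collusion} to remove the colluding users, and then to reduce the bound to a sum of per-share leakage terms through a data-processing argument combined with the subadditivity of entropy. By Lemma~\ref{lemma: no_collusion}, analyzing $\text{ShamirSS}(N,t)$ with $t'$ colluding users is equivalent to analyzing $\text{ShamirSS}(N-t',t-t')$ with no collusion. I may therefore assume $M := N-t'$ honest users with share vector $\mathcal{U}=(\mathcal{U}_1,\ldots,\mathcal{U}_M)$ and leakage vector $\mathbf{Z}=(Z_1,\ldots,Z_M)$, so that it suffices to establish $I(S;\mathbf{Z}) \le \sum_{i=1}^{M} I(\mathcal{U}_i;Z_i)$, after which dividing by $l$ and substituting the per-share bounds $I(\mathcal{U}_i;Z_i)/l \le \epsilon_i$ finishes the argument.

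For the first inequality, I would observe that the secret $S$ is a deterministic function of the full share vector $\mathcal{U}$ (by Lagrange interpolation on the shares of the reduced scheme). Since, conditioned on $\mathcal{U}$, the leakage $\mathbf{Z}$ carries no further dependence on $S$, the three variables form the Markov chain $S - \mathcal{U} - \mathbf{Z}$, and the data-processing inequality immediately yields $I(S;\mathbf{Z}) \le I(\mathcal{U};\mathbf{Z})$.

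The core step is bounding $I(\mathcal{U};\mathbf{Z})$ by $\sum_i I(\mathcal{U}_i;Z_i)$. I would write $I(\mathcal{U};\mathbf{Z}) = H(\mathbf{Z}) - H(\mathbf{Z}\mid\mathcal{U})$ and invoke the independence assumption of the leakage model: each $Z_i$ is produced from $\mathcal{U}_i$ alone through a channel mutually independent of the others, so conditioned on $\mathcal{U}$ the $Z_i$ factorize and $H(\mathbf{Z}\mid\mathcal{U}) = \sum_i H(Z_i\mid\mathcal{U}_i)$. Combining this with subadditivity $H(\mathbf{Z}) \le \sum_i H(Z_i)$ gives
\[
I(\mathcal{U};\mathbf{Z}) \le \sum_{i=1}^{M}\big(H(Z_i) - H(Z_i\mid\mathcal{U}_i)\big) = \sum_{i=1}^{M} I(\mathcal{U}_i;Z_i).
\]
Chaining the two inequalities then delivers $I(S;\mathbf{Z})/l \le \sum_{i=1}^{M}\epsilon_i$, which is the claimed bound with $M=N-t'$.

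I anticipate the main obstacle to lie in carefully justifying the conditional factorization $H(\mathbf{Z}\mid\mathcal{U}) = \sum_i H(Z_i\mid\mathcal{U}_i)$ from the model's independence assumption: one must verify both that each $Z_i$ depends on $\mathcal{U}$ only through its own coordinate $\mathcal{U}_i$ and that the leakage channels are mutually independent, so that the conditioning collapses correctly onto the marginal channels. Once this factorization is secured, the remaining steps are routine applications of standard entropy inequalities.
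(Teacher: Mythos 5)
Your proof is correct, but it takes a genuinely different route from the paper's. You invoke Lemma~\ref{lemma: no_collusion} only to strip out the $t'$ actual colluders, and then your core argument is purely channel-theoretic: since the secret is a deterministic function of the full share vector, $S - \mathcal{U} - \mathbf{Z}$ is a Markov chain, and the conditional factorization $H(\mathbf{Z}\mid\mathcal{U}) = \sum_i H(Z_i\mid\mathcal{U}_i)$ plus subadditivity of entropy gives $I(S;\mathbf{Z}) \le \sum_{i=1}^{N-t'} I(\mathcal{U}_i;Z_i)$. Notably, you never use the secrecy property of Shamir's scheme, so your argument applies verbatim to any scheme in which the secret is reconstructible from the full set of shares. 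The paper instead exploits the threshold structure: it enlarges the conditioning set to $t-1$ shares $\mathcal{U}'$ (absorbing some honest users), uses perfect secrecy $I(S;\mathcal{U}')=0$ via the chain rule, and then applies Lemma~\ref{lemma: no_collusion} with $t-1$ colluders to reduce to ShamirSS$(N-t+1,1)$ --- a repetition scheme in which every leaked observation is an independent noisy copy of the secret itself --- before applying the same independence-plus-subadditivity step. The payoff of the paper's route is a tighter intermediate bound: a sum over only $N-t+1$ leakage terms rather than your $N-t'$ terms (recall $t' \le t-1$, so $N-t+1 \le N-t'$), and one may even choose which $t-1-t'$ honest users to absorb into $\mathcal{U}'$, discarding the largest $\epsilon_i$'s; the paper then loosens this to the stated $\sum_{i=1}^{N-t'}\epsilon_i$. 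What your route buys is simplicity and generality --- no appeal to perfect secrecy, and a bound that matches the proposition exactly as stated.
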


\begin{proof}
Consider a set $\mathcal{U}' \equiv \{ \mathcal{U}_1,\ldots,\mathcal{U}_{t-1} \}$ of $t-1$ secret shares which contains the secret shares of the $t'$ colluding users.  Since $S - (\mathcal{U}_1,\ldots,\mathcal{U}_{t-1}, Z_t,\ldots,Z_N) - (Z_1,\ldots,Z_N)$ is a Markov chain, using the data processing inequality,
    \begin{align*}
        I(S; \mathbf{Z}) &= I\left(S; (Z_1,\ldots,Z_N)\right) \\
        &\le I\left(S; (\mathcal{U}_1,\ldots,\mathcal{U}_{t-1}, Z_{t}, \ldots, Z_N)\right) \\
        &\le I\left(S; \mathcal{U}' \right) + I\left(S; ( Z_{t}, \ldots, Z_N)/\mathcal{U}'\right) \\
        &\le I\left(S; ( Z_{t}, \ldots, Z_N)/\mathcal{U}'\right)
    \end{align*}
    But $I\left(S; ( Z_{t}, \ldots, Z_N)/\mathcal{U}'\right)$ is the information leaked from the secret in ShamirSS$(N,t)$ with $t-1$ colluding users. Using Lemma~\ref{lemma: no_collusion}, this scheme is as secure as the ShamirSS$(N-t+1,1)$ scheme with the same adversary leakage channels. The ShamirSS$(N-t+1,1)$ is a scheme where all the secret shares are the secret $S'$ itself, and the adversary obtains noisy versions $\mathbf{Z}' = (Z_t',\ldots,Z_N')$ of the secret through all the N-t+1 users. Therefore, the information leaked in the ShamirSS$(N,t)$ scheme can be bounded as 
    \begin{align}
    I(S;\mathbf{Z}) &\le
    I\left(S; ( Z_{t}, \ldots, Z_N)/\mathcal{U}'\right) \\ 
    &= I (S'; \mathbf{Z'} = (Z'_t,\ldots,Z'_{N})) \\ \label{eqn: N-t+1 independece 1}&= H(\mathbf{Z'}) - H(\mathbf{Z'}/S') \\
    \label{eqn: N-t+1 independece 2}&=  H(\mathbf{Z'}) - \sum_{i=t}^{i=N}H(Z'_i/S') \\
    &\le \sum_{i=t}^{i=N} \left( H(Z'_i) - H(Z'_i/S)\right) \\
    &\le \sum_{i=t}^{i=N} I(S';Z'_i) \\
    &\le \sum_{i=1}^{i=N-t'} l \cdot \epsilon_i
    \end{align}
    Here \eqref{eqn: N-t+1 independece 1}$\implies$\eqref{eqn: N-t+1 independece 2} due to the independence of the $N-t+1$ leakage channels given all the secret shares are $S'$.
\end{proof}

\begin{cor}\label{cor: loos_security_metrics}
    The MIS-security metric can be characterized as $\eta_{MIS} \le \sum_{i=1}^{i=N-t'} l \cdot \epsilon_i$. Using Theorem~\ref{thm: inequality_equivalence}, semantic and distinguishing security metrics can be bounded as $\eta_{SS} \le \eta_{DS} \le \sqrt{2l \left(\sum_{i=1}^{i=N-t'} \epsilon_i\right)}$.
\end{cor}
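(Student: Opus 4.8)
The plan is to read this as a direct corollary of Proposition~\ref{prop: general_security}, combined with the definition of the MIS metric and Theorem~\ref{thm: inequality_equivalence}. First I would identify the message $M$ in Definition~\ref{def: mutual_information_security} with the secret $S$, and the adversary-channel output $\text{ChA}(M)$ with the vector of leaked shares $\mathbf{Z}$, so that the composite leakage map $S \mapsto \mathbf{Z}$ plays the role of $\text{ChA}$. Under this identification the MIS metric becomes $\eta_{\text{MIS}} = \max_{P_S} I(S; \mathbf{Z})$.

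The core step is then to verify that the bound $I(S; \mathbf{Z}) \le \sum_{i=1}^{N-t'} l\,\epsilon_i$ established in Proposition~\ref{prop: general_security} holds uniformly over the choice of secret distribution $P_S$. I would revisit the chain in that proof and confirm that no step depends on a particular $P_S$: the Markov-chain/data-processing step, the vanishing of $I(S; \mathcal{U}')$ (which follows from the perfect privacy of Shamir's scheme against fewer than $t$ shares and is therefore a worst-case statement), and the per-share bounds $I(S'; Z'_i) \le l\,\epsilon_i$ obtained after the reduction of Lemma~\ref{lemma: no_collusion}. Once uniformity is confirmed, taking the maximum over $P_S$ on the left while the right-hand side stays constant gives $\eta_{\text{MIS}} = \max_{P_S} I(S; \mathbf{Z}) \le \sum_{i=1}^{N-t'} l\,\epsilon_i$, which is the first claim.

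Finally I would substitute this bound into Theorem~\ref{thm: inequality_equivalence}: since $\eta_{\text{SS}} \le \eta_{\text{DS}} \le \sqrt{2\eta_{\text{MIS}}}$ and the square root is monotone, replacing $\eta_{\text{MIS}}$ by its upper bound yields $\eta_{\text{SS}} \le \eta_{\text{DS}} \le \sqrt{2l\left(\sum_{i=1}^{N-t'} \epsilon_i\right)}$ after factoring $l$ out of the sum, which is exactly the claimed estimate.

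The step I expect to be the main obstacle is the uniformity-over-$P_S$ argument: Definition~\ref{def: mutual_information_security} takes a maximum over message distributions, whereas Proposition~\ref{prop: general_security} is phrased for a single generic secret $S$. The corollary is only legitimate if the per-bit leakage constants $\epsilon_i$ are themselves worst-case (distribution-independent) quantities and if the reduction used in the proposition remains valid for every $P_S$. Making this precise — in particular checking that the distribution maximizing the MIS metric does not violate the share-distribution assumptions under which the $\epsilon_i$ were defined — is the one place where more than a mechanical substitution is required; everything else is bookkeeping.
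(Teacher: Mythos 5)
Your proposal is correct and follows essentially the same (implicit) route as the paper, which states this corollary without proof as an immediate consequence of Proposition~\ref{prop: general_security} combined with Theorem~\ref{thm: inequality_equivalence}: the proposition's chain of inequalities (data processing, perfect privacy of fewer than $t$ shares, channel independence) nowhere uses the distribution of $S$, so the bound survives the maximization over $P_S$ in Definition~\ref{def: mutual_information_security}. The uniformity concern you flag is legitimate and is the right thing to check — the paper itself is explicit about this only for the later Corollary~\ref{cor: security_metrics} — and it is resolved exactly as you suggest, by reading the leakage constants $\epsilon_i$ as channel-level (worst-case over input distributions) constraints in the wiretap model.
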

\subsection{The Extreme-Threshold Scenario}

In general, it might be possible to improve upon the upper bound on the leakage provided in Proposition\,\ref{prop: general_security}. In order to demonstrate this, in the remainder of this section, we consider an extreme scenario of the Shamir's scheme, where the $\text{ShamirSS}(N,N)$ scheme is deployed, and the adversary has access to a noisy version of all the secret shares. In subsequent analysis, in light of Lemma\,\ref{lemma: no_collusion}, it is sufficient to consider scenarios without colluding users, that is, assume that the adversary only has a noisy observation of each secret share. Furthermore, we will assume that each secret share is leaked with $I(\mathcal{U}_j;\mathbf{Z})/l \le \epsilon$ per bit leakage rate for some $\epsilon < 1$. Recall from Section~\ref{subsec:ShamirSS}, in~\eqref{eqn: ShamirSS(N,t) coefficients}, once someone has access to the evaluation points, the secret recovery equation for ShamirSS$(N, N)$ can be uniquely written as 
\begin{equation}\label{equation: secret_linear_dependence}
    s = \sum_{i=1}^{N}c_is_i,
\end{equation}
for some $c_i \in \mathbb{F}_q$. The unique representation of the secret as a linear combination of all the secret shares is a consequence of the following system of equations having a unique solution:
\begin{equation}
  \begin{bmatrix}
    1 & \gamma_1 & \gamma_{1}^2 & \cdots & \gamma_{1}^{N-1} \\
    1 & \gamma_2 & \gamma_{2}^2 & \cdots & \gamma_{2}^{N-1} \\
    \vdots &\vdots & \vdots & &\vdots \notag \\
    1 & \gamma_N & \gamma_{N}^2 & \cdots & \gamma_{N}^{N-1}
  \end{bmatrix}
  \begin{bmatrix}
      s \\
      p_1 \\
      \vdots \notag \\
      p_{N-1}
  \end{bmatrix} = 
  \begin{bmatrix}
      s_1 \\
      s_2 \\
      \vdots \notag \\
      s_{N}
  \end{bmatrix}.
\end{equation}

\begin{rem}\label{rem: choosing_V}
    The coefficients $c_i$'s in \eqref{equation: secret_linear_dependence}, i.e. the secret recovery equation for ShamirSS$(N,N)$, are the elements of the first row in the inverse of the Vandermonde matrix: 
    \begin{equation*}
        V^{-1} = 
        \begin{bmatrix}
    1 & \gamma_1 & \gamma_{1}^2 & \cdots & \gamma_{1}^{N-1} \\
    1 & \gamma_2 & \gamma_{2}^2 & \cdots & \gamma_{2}^{N-1} \\
    \vdots &\vdots & \vdots & &\vdots \notag \\
    1 & \gamma_N & \gamma_{N}^2 & \cdots & \gamma_{N}^{N-1}
  \end{bmatrix}^{-1}. 
    \end{equation*}
    Therefore, replacing the matrix $V$ by any invertible matrix gives us a linear secret sharing scheme with a unique secret recovery equation whose coefficients are determined by the first row of the inverse of the chosen matrix. If the chosen matrix is such that first row of its inverse is filled entirely with $1$s, the resulting linear secret sharing scheme has the recovery equation
    \begin{equation}\label{eqn: linear_ss}
        s = s_1 + s_2 + \cdots + s_N.
    \end{equation}
\end{rem}

To bound the information leaked from the secret $S$ by the leaked secret shares, for the two cases, 
\begin{itemize}
    \item \textbf{Case 1.} we will use Mrs. Gerber's Lemma (MGL) \cite{wyner1973theorem} (Lemma~\ref{lemma: Gerbers}) in proving Proposition~\ref{prop: security_MGL}; and for
    \item \textbf{Case 2.} we will use Corollary~\ref{corollary: Markov_chain_MI} in proving Proposition~\ref{prop: security_lemma}. 
\end{itemize}
Note that both the MGL, and Corollary~\ref{corollary: Markov_chain_MI} only hold true for the field $\mathbb{F}_2$. Since addition in the field $\mathbb{F}_{2^l}$ can be done by bit-wise addition of the vector representations over $\mathbb{F}_2$, we only provide bit-wise security guarantees. To bound information leakage for each bit, separate equations for each bit of the secret need to be formed from the secret recovery equation~\eqref{equation: secret_linear_dependence}. But multiplication in the field $\mathbb{F}_{2^l}$ depends on the choice of the primitive (splitting) polynomial. Assuming that the adversary has access to all the coefficients of the secret shares and the splitting polynomial, the adversary has access to the $l$ bit-wise equations for each bit of the secret. Our bounding results, to be presented in Propositions~\ref{prop: security_MGL}, and ~\ref{prop: security_lemma}, depend upon the number of summands in the estimation of the sum of the random variables. Alternatively, one can replace the ShamirSS$(N,N)$ with another linear scheme where all the secret shares, $s_i$'s, appear in the reconstruction of the secret $s$ with coefficient $1$, as discussed in Remark~\ref{rem: choosing_V}. This automatically improves the bounds, which we will see in Remarks~\ref{rem: MGL security} and~\ref{rem: (N,N) security}. 
\begin{rem}\label{rem: tildeN_assumption}
In general, it is not straightforward to characterize the exact number of summands in the $l$ bit-wise equations for the $l$ bits of the secret. Let $\Tilde{N}$ denotes the minimum of the number of non-zero summands across the $l$ bit-wise equations (derived from \eqref{equation: secret_linear_dependence}). We will utilize $\Tilde{N}$ in the expression of the bounds in order to simplify them. Note that given the randomized coefficient selection inherent in Shamir's scheme (Section~\ref{subsec:ShamirSS}), it is expected that $\Tilde{N}$ grows at least linearly with the number of users, i.e., $\Tilde{N} = \Omega(N)$.    
\end{rem}

  

\subsection{Security Analysis}
\begin{lem}[Mrs. Gerber's Lemma, \cite{wyner1973theorem}]\label{lemma: Gerbers}
    Let $A_0$, $A_1$ be independent $\mathbb{Z}_2$-valued random variables with side information $\mathbf{B} = (B_0,B_1)$ (i.e. $B_0-A_0-A_1-B_1$ is a Markov chain) , and $C = A_0 \oplus A_1$. Then
    \begin{align}
        H(C|B) \ge h \left( h^{-1}\left(H(A_0|B_0) \right) \star h^{-1}\left(H(A_1|B_1) \right) \right),
    \end{align}
    where $a \star b = a(1-b) + b(1-a)$.
\end{lem}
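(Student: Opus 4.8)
The plan is to reduce the conditional statement to the unconditional binary case and then invoke a scalar convexity property of the binary entropy function $h$. First I would exploit the joint structure forced by the hypotheses. Since $A_0$ and $A_1$ are independent and $B_0 - A_0 - A_1 - B_1$ is a Markov chain, the joint law factorizes as $p(b_0,a_0,a_1,b_1) = p(a_0)\,p(a_1)\,p(b_0\mid a_0)\,p(b_1\mid a_1)$. Two consequences follow at once: $B_0$ and $B_1$ are independent, because $p(b_0,b_1)=p(b_0)\,p(b_1)$; and conditioned on any realization $B=(b_0,b_1)$, the variables $A_0$ and $A_1$ remain independent, with $A_0\mid B$ distributed as $\mathrm{Bern}(p_0)$ depending only on $b_0$ and $A_1\mid B$ as $\mathrm{Bern}(p_1)$ depending only on $b_1$.

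Next I would compute the per-realization entropy exactly. Given $B=(b_0,b_1)$, the XOR $C=A_0\oplus A_1$ of two independent Bernoulli variables is $\mathrm{Bern}(p_0\star p_1)$, so $H(C\mid B=(b_0,b_1))=h(p_0\star p_1)$. Because $h$ is symmetric about $1/2$ and $(1-p)\star q = 1-(p\star q)$, I may replace each $p_i$ by its representative in $[0,1/2]$, namely $h^{-1}(H(A_i\mid B_i=b_i))$, without changing the value. Writing $\Phi(u,v)\equiv h\!\left(h^{-1}(u)\star h^{-1}(v)\right)$ and $u_i(b_i)\equiv H(A_i\mid B_i=b_i)$, this yields the identity $H(C\mid B)=\mathbb{E}_{B_0}\mathbb{E}_{B_1}\big[\Phi(u_0(B_0),u_1(B_1))\big]$, where the two expectations separate precisely because $B_0$ and $B_1$ are independent.

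The key structural input is the scalar convexity property: for each fixed $\beta\in[0,1/2]$, the map $u\mapsto h\!\left(\beta\star h^{-1}(u)\right)$ is convex on $[0,1]$, hence $\Phi$ is convex in each argument separately. Granting this, I would apply Jensen's inequality twice. Fixing $b_1$ and averaging over $B_0$ gives $\mathbb{E}_{B_0}[\Phi(u_0(B_0),u_1(b_1))]\ge \Phi(H(A_0\mid B_0),u_1(b_1))$; averaging that over $B_1$ gives $\mathbb{E}_{B_1}[\Phi(H(A_0\mid B_0),u_1(B_1))]\ge \Phi(H(A_0\mid B_0),H(A_1\mid B_1))$. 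Chaining the two inequalities produces exactly $H(C\mid B)\ge h\!\left(h^{-1}(H(A_0\mid B_0))\star h^{-1}(H(A_1\mid B_1))\right)$, as claimed.

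The main obstacle is establishing the convexity property, which is the technical heart of Wyner and Ziv's argument. I would prove it by the substitution $v=h^{-1}(u)\in[0,1/2]$ and $w=\beta\star v$, together with a second-derivative computation. Using $h'(x)=\ln\frac{1-x}{x}$ and $h''(x)=-\frac{1}{x(1-x)}$, the chain rule gives $\frac{d^2}{du^2}h(w)=(1-2\beta)\,\frac{(1-2\beta)h''(w)h'(v)-h'(w)h''(v)}{h'(v)^3}$, whose sign is that of the bracketed term; requiring it to be nonnegative reduces, after clearing the positive factors, to the inequality $(1-2\beta)\,v(1-v)\,h'(v)\le w(1-w)\,h'(w)$ for $v\in[0,1/2]$ with $w=\beta\star v$. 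Verifying this inequality, which degenerates to equality at $\beta\in\{0,1/2\}$, is the delicate step; everything else is bookkeeping. Since the statement is quoted from \cite{wyner1973theorem}, one may instead cite the scalar convexity lemma directly and present only the reduction and the two Jensen steps.
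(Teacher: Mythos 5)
The paper does not actually prove this lemma: it is quoted verbatim from Wyner and Ziv \cite{wyner1973theorem} and used as a black box, so there is no in-paper argument to compare yours against. Your reconstruction is essentially the classical Wyner--Ziv proof, and it is correct in all the steps you carry out. The factorization $p(b_0,a_0,a_1,b_1)=p(a_0)\,p(a_1)\,p(b_0\mid a_0)\,p(b_1\mid a_1)$ does follow from independence of $A_0,A_1$ together with the Markov chain $B_0-A_0-A_1-B_1$; it correctly yields both the independence of $B_0$ and $B_1$ and the conditional independence of $A_0,A_1$ given $(b_0,b_1)$; the per-realization identity $H(C\mid B=(b_0,b_1))=h(p_0\star p_1)$ is right; the reflection $p_i\mapsto 1-p_i$ is harmless because $(1-p)\star q=1-(p\star q)$ and $h(1-x)=h(x)$; and both Jensen steps go in the needed direction, since convexity gives $\mathbb{E}[f(X)]\ge f(\mathbb{E}[X])$ and the fixed second argument $\beta=h^{-1}(\cdot)$ stays in $[0,1/2]$ in each application. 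Your second-derivative computation for $u\mapsto h\left(\beta\star h^{-1}(u)\right)$ and its reduction to the scalar inequality $(1-2\beta)\,v(1-v)\,h'(v)\le w(1-w)\,h'(w)$ with $w=\beta\star v$ are also correct. The one piece you leave unverified is exactly that scalar inequality, which is the technical heart of Wyner and Ziv's convexity lemma; you flag this honestly, and your fallback of citing it is entirely defensible given that the paper itself cites the full lemma rather than proving anything. In short: your proposal supplies a genuine proof where the paper supplies only a reference, and the proof is the standard one, complete modulo the cited (or to-be-verified) convexity lemma.
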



\begin{prop}\label{prop: security_MGL}
    Given that the secret $S$ is a uniform random variable in $F_{2^l}$, the mutual information leakage from each bit of the secret in ShamirSS$(N, N)$ with $t'$ colluding users is bounded as,
    \begin{equation}
        I(S^i; \mathbf{Z}) \le \hspace{1mm} \delta^{2(\Tilde{N}-t')},
    \end{equation}
    where $S^i$ is the random variable characterizing the $i^{th}$ bit of the secret, $\Tilde{N}$ is defined in Remark~\ref{rem: tildeN_assumption}, and $\delta = 1 - 2 h^{-1}\left( 1- \epsilon \right)$. 
\end{prop}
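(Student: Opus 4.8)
The plan is to write each bit of the secret as an XOR of independent per-share contributions and then apply Mrs.\ Gerber's Lemma (Lemma~\ref{lemma: Gerbers}) iteratively, tracking a multiplicative contraction parameter. First I would invoke Lemma~\ref{lemma: no_collusion} to remove the $t'$ colluding shares, reducing to a no-collusion ShamirSS$(N-t',N-t')$ instance; this is what produces the $-t'$ in the exponent, since the known colluding shares carry no uncertainty. In the $(N,N)$ scheme the map from the polynomial coefficients $(s,p_1,\dots,p_{N-1})$ to the shares is the invertible Vandermonde map of Remark~\ref{rem: choosing_V}, so for a uniform secret the shares $(S_1,\dots,S_N)$ are mutually independent and uniform over $\mathbb{F}_{2^l}$; together with the independence of the leakage channels, the pairs $(S_j,Z_j)$ are mutually independent across $j$.

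Next I would rewrite the recovery equation~\eqref{equation: secret_linear_dependence} coordinate-wise. Because addition in $\mathbb{F}_{2^l}$ is bit-wise XOR, the $i$-th secret bit satisfies $S^i=\bigoplus_j (c_jS_j)^i$, where each summand $(c_jS_j)^i$ is a single bit depending on share $j$ alone. Thus $S^i$ is an XOR of at least $\tilde N - t'$ genuinely random bits (by the count in Remark~\ref{rem: tildeN_assumption}, after discarding the $t'$ colluding summands), each carrying its own independent side information $Z_j$. The Markov structure $B_0-A_0-A_1-B_1$ required by Lemma~\ref{lemma: Gerbers} then holds at every step of an induction on the number of summands, precisely because each new pair $(S_j,Z_j)$ is independent of the partial XOR formed so far and of its accumulated side information.

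The analytic heart is the iteration of MGL. Writing $p_j = h^{-1}\!\big(H((c_jS_j)^i\mid Z_j)\big)$ and $\beta_j = 1-2p_j$, the key algebraic identity $1-2(p\star p') = (1-2p)(1-2p')$ converts the $\star$-convolution of Lemma~\ref{lemma: Gerbers} into ordinary multiplication, so after $m=\tilde N - t'$ steps one obtains $H(S^i\mid\mathbf Z)\ge h\!\big(\tfrac{1-\prod_j\beta_j}{2}\big)$. The per-bit reading of the leakage rate gives $H((c_jS_j)^i\mid Z_j)\ge 1-\epsilon$, hence $\beta_j \le 1-2h^{-1}(1-\epsilon)=\delta$ for each honest summand, and since the $\beta_j\in[0,\delta]$ with $\delta\in[0,1]$, adding more summands only shrinks the product, giving $\prod_j\beta_j\le\delta^{\,\tilde N - t'}$. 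Finally, since $S^i$ is a uniform bit, $I(S^i;\mathbf Z)=1-H(S^i\mid\mathbf Z)\le 1-h\!\big(\tfrac{1-\delta^{\tilde N - t'}}{2}\big)$, and the elementary inequality $1-h(\tfrac{1-x}{2})\le x^2$ on $[0,1]$ closes the argument with $x=\delta^{\tilde N - t'}$, yielding $I(S^i;\mathbf Z)\le\delta^{2(\tilde N - t')}$.

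The step I expect to be the main obstacle is the bookkeeping that makes the MGL iteration legitimate: verifying that the bit-wise decomposition really yields a XOR of functions of \emph{distinct} independent shares so that the prescribed Markov chains hold at each induction step, and certifying that at least $\tilde N - t'$ of these summands remain non-degenerate once the colluding shares are removed. This is exactly where the hard-to-pin-down quantity $\tilde N$ of Remark~\ref{rem: tildeN_assumption} enters, and where the per-bit interpretation of $\epsilon$ (rather than the aggregate $l\epsilon$ bound on the whole share) must be used. By contrast, the $\beta$-multiplicativity and the scalar inequality $1-h(\tfrac{1-x}{2})\le x^2$ are routine and can be verified directly.
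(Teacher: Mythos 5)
Your proposal follows essentially the same route as the paper's proof: the reduction via Lemma~\ref{lemma: no_collusion} to remove the $t'$ colluding users, the bit-wise XOR decomposition of $S^i$ into per-share summands that are independent because a uniform secret makes the shares independent, iterated Mrs.\ Gerber's Lemma with the multiplicativity $1-2(p\star p')=(1-2p)(1-2p')$, and the closing scalar bound $1-h\bigl(\tfrac{1-x}{2}\bigr)\le x^2$. The differences are only presentational: you spell out the contraction-parameter bookkeeping and the final inequality that the paper compresses into one chain, and your per-bit reading of the leakage constraint, $H\bigl((c_jS_j)^i\mid Z_j\bigr)\ge 1-\epsilon$, is the same interpretive step the paper takes when it sets $q=h^{-1}\bigl(H(\mathcal{U}_i\mid Z_i)/l\bigr)$ and feeds it into MGL.
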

\begin{proof}
Using Lemma~\ref{lemma: no_collusion}, we can work with a scheme without colluding users i.e. $\Tilde{N} \to (\Tilde{N}-t')$. The secret $S$ is a linear combination of the secret shares $\mathcal{U}_i$'s in the field $F_{2^l}$. By the assumption on the number of non-zero summands in the $l$ bit equations, let $S^i = \mathcal{u}_1 + \cdots + \mathcal{u}_{\Tilde{N}_i}$ where $\Tilde{N}_i>\Tilde{N}$ is the number of non-zero summands in the equation for $S^i$, and $\mathcal{u}_j$, $j \in \{1, \cdots, \Tilde{N}_i \}$, is a bit in the binary vector representation of some secret share $\mathcal{U}_j$, $j \in \{1, \cdots, N \}$. The secret $S$ is uniformly random, and therefore all the summands $\mathcal{U}_j$'s are independent. Let $h^{-1}\left(H(\mathcal{U}_i|Y_i)/l \right) = q$. Using Mrs. Gerber's Lemma for each bit,
    \begin{align*}
        H(S^i/\mathbf{Z}) &\ge h \left( \star_{1}^{\Tilde{N}} h^{-1}\left(H(\mathcal{U}_i|Z_i)/l \right) \right) \\ 
        &\ge h(q \star q \star \cdots \star q) \\
         &\ge h \left( \frac{1-\delta^{\Tilde{N}}}{2} \right) \\
         &\ge 1- \delta^{2\Tilde{N}}. \\
         I(S^i; \mathbf{Z}) \hspace{1mm} &= H(S^i) - H(S^i/\mathbf{Z}) \\
         &\le \delta^{2\Tilde{N}},
    \end{align*}
where $\delta = 1-2q = 1 - 2 h^{-1}\left( 1- \epsilon \right)$.
\end{proof}
\begin{cor}
    For the ShamirSS$(N,N)$ scheme with $\Tilde{N}>\kappa N$, where $\kappa<1$; the mutual information leaked from each bit of the secret reduces exponentially with the number of users.
\end{cor}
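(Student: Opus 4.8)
The plan is to apply Mrs. Gerber's Lemma (Lemma~\ref{lemma: Gerbers}) iteratively across the $\Tilde{N}$ summands that appear in the bit-wise equation for $S^i$. First I would invoke Lemma~\ref{lemma: no_collusion} to reduce to the no-collusion setting, which shifts the effective number of summands from $\Tilde{N}$ to $\Tilde{N}-t'$; for the remainder of the argument I would simply track $\Tilde{N}$ and substitute at the end. The starting point is the bit-wise decomposition $S^i = \mathcal{u}_1 \oplus \cdots \oplus \mathcal{u}_{\Tilde{N}_i}$ with $\Tilde{N}_i \ge \Tilde{N}$, where each $\mathcal{u}_j$ is one bit drawn from the binary representation of a distinct secret share. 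Since $S$ is uniform, the shares $\mathcal{U}_j$ are mutually independent, so the relevant bits are independent and the Markov-chain hypothesis of Lemma~\ref{lemma: Gerbers} is satisfied when the leaked observations $Z_j$ are conditioned appropriately. I would set $q = h^{-1}\!\left(H(\mathcal{U}_i \mid Z_i)/l\right)$ and note that the per-bit leakage bound $I(\mathcal{U}_i;Z_i)/l \le \epsilon$ translates into $H(\mathcal{U}_i \mid Z_i)/l \ge 1-\epsilon$, so that $q \ge h^{-1}(1-\epsilon)$.

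The core of the argument is the repeated convolution under the binary operation $a \star b = a(1-b)+b(1-a)$. Applying Lemma~\ref{lemma: Gerbers} once gives $H(\mathcal{u}_1 \oplus \mathcal{u}_2 \mid \mathbf{B}) \ge h(q \star q)$, and inducting across all $\Tilde{N}$ summands yields $H(S^i \mid \mathbf{Z}) \ge h\bigl(\star^{\Tilde{N}}_{1} q\bigr)$, the $\Tilde{N}$-fold star-convolution of $q$ with itself. The key algebraic identity I would use here is that the $\star$ operation contracts toward $1/2$ geometrically: writing $\delta = 1 - 2q$, one has $q \star q = \tfrac{1}{2}(1 - \delta^2)$, and more generally the $\Tilde{N}$-fold convolution equals $\tfrac{1}{2}\bigl(1 - \delta^{\Tilde{N}}\bigr)$. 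Plugging this in gives $H(S^i \mid \mathbf{Z}) \ge h\!\left(\tfrac{1-\delta^{\Tilde{N}}}{2}\right)$.

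The final step converts the binary-entropy bound into the claimed leakage bound. I would use the standard quadratic lower estimate for $h$ near its maximum, namely $h\!\left(\tfrac{1-x}{2}\right) \ge 1 - x^2$, applied with $x = \delta^{\Tilde{N}}$, to obtain $H(S^i \mid \mathbf{Z}) \ge 1 - \delta^{2\Tilde{N}}$. Since $S^i$ is a uniform bit, $H(S^i) = 1$, and therefore $I(S^i;\mathbf{Z}) = H(S^i) - H(S^i \mid \mathbf{Z}) \le \delta^{2\Tilde{N}}$. Substituting $\Tilde{N} \to \Tilde{N}-t'$ from the collusion reduction gives the stated bound $I(S^i;\mathbf{Z}) \le \delta^{2(\Tilde{N}-t')}$ with $\delta = 1 - 2h^{-1}(1-\epsilon)$.

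The main obstacle I anticipate is rigorously justifying that the Markov/independence hypothesis required by Mrs. Gerber's Lemma actually holds at each stage of the induction. Lemma~\ref{lemma: Gerbers} is stated for two independent inputs with a factorizing side-information structure, so to chain it $\Tilde{N}-1$ times I need the partial sums and their conditioning observations to retain the requisite conditional-independence form at every step; the independence of the $\mathcal{U}_j$ (guaranteed by the uniformity of $S$) together with the independence of the leakage channels should supply this, but the bookkeeping is delicate. A secondary subtlety is that $\Tilde{N}_i$ may exceed $\Tilde{N}$ and differ across bits, so I would argue that adding further independent summands only drives $H(S^i \mid \mathbf{Z})$ closer to $1$ (monotonicity of the $\star$-contraction), making $\Tilde{N}$—the minimum over the $l$ equations—a valid uniform lower bound on the number of convolutions and hence yielding the worst-case estimate.
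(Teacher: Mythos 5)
Your proposal is correct and takes essentially the same route as the paper: the corollary is an immediate consequence of Proposition~\ref{prop: security_MGL}, and your derivation of that bound---reduction to the no-collusion case via Lemma~\ref{lemma: no_collusion}, iterated Mrs.\ Gerber star-convolution giving $H(S^i\mid\mathbf{Z}) \ge h\bigl(\tfrac{1-\delta^{\Tilde{N}}}{2}\bigr)$, and the quadratic estimate $h\bigl(\tfrac{1-x}{2}\bigr)\ge 1-x^2$---mirrors the paper's proof of that proposition step for step. The only thing to add is the one-line concluding observation the corollary actually asks for: since $\epsilon<1$ forces $\delta = 1-2h^{-1}(1-\epsilon) < 1$, and $\Tilde{N}>\kappa N$ gives $\delta^{2(\Tilde{N}-t')} \le \delta^{2(\kappa N - t')}$, the leakage bound decays exponentially in the number of users $N$.
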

\begin{rem}\label{rem: MGL security}
    If a linear secret sharing scheme is of the form described in Remark~\ref{rem: choosing_V}, with its secret recovery equation as \eqref{eqn: linear_ss}; the bit $S^i$ is the summation of $i^{th}$ bit of each secret share, and the leakage can be bounded as $I(S^i, \mathbf{Z}) \le \delta^{2(N-t')}$. 
\end{rem}
To use Mrs. Gerber's lemma, we need independence of the random variables being added. In this application of Mrs. Gerber's lemma, the summand random variables are the secret shares, and they are independent if and only if the secret $S$ is distributed uniformly. But as seen in Definition~\ref{def: mutual_information_security}, MIS-security requires bounding mutual information leakage for all distributions of the secret. Therefore, even for the Shamir's scheme with a general threshold, ShamirSS$(N,t)$, Mrs. Gerber's Lemma cannot be used to provide MIS-security guarantee. To understand the leakage resilience of ShamirSS$(N,N)$, we consider the case of leakage through BSCs. 
    
\begin{lem}[]\label{lemma: (N,N)}
    Let $\mathbf{X} = (X_0, \cdots, X_{k-1})$ be $\mathbb{Z}_2$-valued uniformly distributed random variables with every $k-1$ of them being mutually independent. Let $\mathbf{Y} = (Y_0,\cdots,Y_{k-1})$ be the output of a BSC with $\mathbf{X}$ as the input. Then 
    \begin{equation}\label{eqn: lemma (N,N)}
        (X_0+\cdots+X_{k-1}) - (Y_0 + \cdots + Y_{k-1}) - \mathbf{Y}  
    \end{equation}
    is a Markov chain.
\end{lem}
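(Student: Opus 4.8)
The plan is to prove the equivalent statement that $S_X := X_0 \oplus \cdots \oplus X_{k-1}$ is conditionally independent of the full vector $\mathbf{Y}$ given $S_Y := Y_0 \oplus \cdots \oplus Y_{k-1}$; this is exactly the asserted Markov chain $S_X - S_Y - \mathbf{Y}$. I would model the BSC as $Y_i = X_i \oplus N_i$, where the noise bits $N_i$ are independent of one another and of $\mathbf{X}$, so that the transition law $P(\mathbf{y}\mid\mathbf{x})$ is translation invariant, depending on $(\mathbf{x},\mathbf{y})$ only through the error pattern $\mathbf{e} = \mathbf{x}\oplus\mathbf{y}$. The whole argument then reduces to understanding how the $(k-1)$-wise independence of $\mathbf{X}$ interacts with this channel.

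The key structural step, and the one I expect to be the main obstacle, is to pin down the joint law of $\mathbf{X}$. First I would observe that mutual independence together with uniform marginals makes every $(k-1)$-subset of the $X_i$ jointly uniform on $\{0,1\}^{k-1}$. Marginalizing coordinate $i$ out of the full law then gives, for every $\mathbf{x}$, the identity $P(\mathbf{X}=\mathbf{x}) + P(\mathbf{X}=\mathbf{x}\oplus\mathbf{e}_i) = 2^{-(k-1)}$, where $\mathbf{e}_i$ is the $i$-th unit vector. Applying this twice (flipping coordinate $i$ and then coordinate $j$) yields $P(\mathbf{X}=\mathbf{x}\oplus\mathbf{e}_i\oplus\mathbf{e}_j) = P(\mathbf{X}=\mathbf{x})$, so $P(\mathbf{X}=\cdot)$ is invariant under flipping any two coordinates. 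Since any two binary words of equal weight-parity are joined by a chain of double flips, $P(\mathbf{X}=\mathbf{x})$ depends on $\mathbf{x}$ only through the parity of its Hamming weight $w(\mathbf{x})$. Consequently, conditioned on $S_X = s$, the vector $\mathbf{X}$ is uniform over the coset $\{\mathbf{x}: w(\mathbf{x})\equiv s \pmod 2\}$, and this coset-uniformity is the single fact I carry into the channel computation.

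With this in hand, the remaining steps are routine. I would compute $P(S_X=s, \mathbf{Y}=\mathbf{y}) = \sum_{\mathbf{x}: w(\mathbf{x})\equiv s} P(\mathbf{X}=\mathbf{x})\,P(\mathbf{y}\mid\mathbf{x})$, pull out the constant coset-probability, and substitute $\mathbf{e}=\mathbf{x}\oplus\mathbf{y}$. Using $w(\mathbf{x})\equiv w(\mathbf{y})\oplus w(\mathbf{e})$, the constraint $w(\mathbf{x})\equiv s$ becomes $w(\mathbf{e})\equiv s\oplus t$ with $t := \bigoplus_i y_i$ (the parity of $\mathbf{y}$, a realization of $S_Y$), so the sum collapses to $P\!\left(\bigoplus_i N_i \equiv s\oplus t\right)$, a function of $s\oplus t$ alone. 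Hence $P(S_X=s, \mathbf{Y}=\mathbf{y})$ depends on $\mathbf{y}$ only through $t = S_Y$.

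Finally I would read off conditional independence. Because $P(S_X=s,\mathbf{Y}=\mathbf{y})$ is constant over the $2^{k-1}$ vectors $\mathbf{y}$ of a fixed parity $t$, summing over that parity class gives $P(S_X=s,S_Y=t) = 2^{k-1}\,P(S_X=s,\mathbf{Y}=\mathbf{y})$ and $P(S_Y=t)=2^{k-1}\,P(\mathbf{Y}=\mathbf{y})$; dividing, the factors of $2^{k-1}$ cancel and $P(S_X=s\mid\mathbf{Y}=\mathbf{y}) = P(S_X=s\mid S_Y=t)$ whenever $P(\mathbf{Y}=\mathbf{y})>0$. This is precisely the Markov chain in \eqref{eqn: lemma (N,N)}. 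The only place the hypothesis is essential is the coset-uniformity step; the channel part would go through for any noise acting independently across the bits, since all that is used is translation invariance of the error law.
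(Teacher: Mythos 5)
Your proposal is correct and follows essentially the same route as the paper's proof: both first use the $(k-1)$-wise independence and uniformity to show $P(\mathbf{X}=\mathbf{x})$ depends only on the parity of $\mathbf{x}$ (coset uniformity), then exploit the BSC's translation invariance to show the joint law of $(S_X,\mathbf{Y})$ depends on $\mathbf{y}$ only through its parity, which yields the conditional independence. Your writeup is somewhat more explicit in places (the double-flip connectivity argument and the collapse of the channel sum to the noise-parity probability), but the underlying argument is the same.
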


\begin{proof}
Define the function $S(X) = X_0 + \cdots + X_{k-1}$, where $X$ is some realization of the random variable $\mathbf{X}$. Since $X$ is a binary vector, we can also look at the Hamming weight of the vector denoted by $wt(X)$. Note that while $S(X)$ is a binary variable i.e. $S(X) \in \mathbb{Z}_2$, while $wt(X)$ is a positive integer i.e. $wt(X) \in \mathbb{Z}$. Let the independent BSCs have parameter $q$ i.e. the probability that each bit flips for each bit $X_i$ is $q$. Since each subset of size $(k-1)$ of the bits of the random vector $\mathbf{X}$ is independent, we must have that 
    \begin{align*}
        \frac{1}{2^{k-1}} = \   &Pr(X_0=0, X_1 = 0, \cdots, X_{k-1}=0) \\&+ Pr(X_0=1, X_1 = 0, \cdots, X_{k-1}=0) \\ \frac{1}{2^{k-1}} 
        = \ &Pr(X_0=0, X_1 = 0, \cdots, X_{k-1}=0)\\ &+ Pr(X_0=0, X_1 = 1, \cdots, X_{k-1}=0) \\
       \frac{1}{2^{k-1}} = \ &\hspace{3mm} \cdots
    \end{align*} \\
    Therefore,
    \begin{align*}
    Pr(\mathbf{X} = (i_0, i_1, \cdots ,i_{k-1}))&= Pr(\mathbf{X} = (j_0, j_1, \cdots ,j_{k-1})) \\ \iff i_0 + \cdots i_{k-1} &\equiv j_0 + \cdots j_{k-1} \hspace{-3mm} \mod{2}.  
    \end{align*}
    Define $Pr(\mathbf{X} = (i_0, i_1, \cdots ,i_{k-1})) = p_{S(X)}$ and observe that
    \begin{align*}
        Pr(\mathbf{Y} = (y_0, \ldots y_{k-1}), S(X) = 0) = \\\sum_{X \text{ with } S(X)=0} p_0 q^{wt(X-Y)}(1-q)^{k-wt(X-Y)}.
    \end{align*}
    This is true because exactly $wt(X-Y)$ bits need to be flipped for the input to be $X$, and for the BSC(q) to output $Y = (y_0, \ldots y_{k-1})$ where $Y$ is some realization of the random variable $\mathbf{Y}$. Furthermore, note that for all values of $Y$ such that $S(Y) = 0$, the right hand side remains constant. Therefore,
    \begin{align*}
        &Pr(\mathbf{Y} = (0, \ldots 0) / S(X) = 0, S(Y)=0) \\= &\frac{\sum_{X \text{ with } S(X)=0} p_0 q^{wt(X-Y)}(1-q)^{k-wt(X-Y)}}{\sum_{Y \text{ with } S(Y)=0}\sum_{X \text{ with } S(X)=0} p_0 q^{wt(X-Y)}(1-q)^{k-wt(X-Y)}} \\
        = &\frac{1}{2^{k-1}}
        \\ = &Pr(\mathbf{Y} = (0, \ldots 0) / S(X) = 1, S(Y)=0).
    \end{align*}
    But this implies that $(X_0+\cdots+X_{k-1}) - (Y_0 + \cdots + Y_{k-1}) - \mathbf{Y}$ is a Markov chain.

\end{proof}

\begin{cor}\label{corollary: Markov_chain_MI} \begin{align*}
         &I\left((X_0+\cdots+X_{k-1}); \mathbf{Y} \right) = \\  &I\left((X_0+\cdots+X_{k-1}); (Y_0 + \cdots + Y_{k-1})\right).
    \end{align*}
\end{cor}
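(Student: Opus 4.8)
The plan is to derive this equality of mutual informations directly from the Markov chain established in Lemma~\ref{lemma: (N,N)}, combined with the data processing inequality (DPI) applied in two different directions. Write $S(X) = X_0 + \cdots + X_{k-1}$ and $S(Y) = Y_0 + \cdots + Y_{k-1}$ for the parity of the input and output bits respectively, so that the claim reads $I(S(X); \mathbf{Y}) = I(S(X); S(Y))$. The strategy is to sandwich $I(S(X);\mathbf{Y})$ between $I(S(X);S(Y))$ from above and from below, forcing equality.

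First I would establish $I(S(X); \mathbf{Y}) \le I(S(X); S(Y))$. This follows immediately from Lemma~\ref{lemma: (N,N)}, which asserts that $S(X) - S(Y) - \mathbf{Y}$ is a Markov chain; applying DPI to this chain (with $S(X)$, $S(Y)$, $\mathbf{Y}$ in that order) yields exactly this bound. Next I would establish the reverse inequality $I(S(X); S(Y)) \le I(S(X); \mathbf{Y})$. The key observation here is that $S(Y)$ is a \emph{deterministic} function of $\mathbf{Y}$, namely the parity of its coordinates; hence $S(X) - \mathbf{Y} - S(Y)$ is trivially a Markov chain, and DPI applied to this second chain gives the reverse bound. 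Combining the two inequalities yields the desired equality.

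I do not expect a genuine obstacle here, since the substantive content — that the parity of the noisy observations is a sufficient statistic for the parity of the inputs — is precisely what Lemma~\ref{lemma: (N,N)} already accomplishes. The corollary merely repackages that Markov structure, together with the trivial direction coming from $S(Y)$ being a function of $\mathbf{Y}$, into a statement about mutual information. The only point requiring mild care is to invoke DPI with the random variables in the correct order for each of the two chains, so that the two bounds point in opposite directions and genuinely close the gap.
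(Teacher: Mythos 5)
Your proof is correct and follows essentially the same route as the paper: the paper likewise applies the data processing inequality twice, once to the Markov chain $S(X) - S(Y) - \mathbf{Y}$ from Lemma~\ref{lemma: (N,N)} and once to the chain $S(X) - \mathbf{Y} - S(Y)$ obtained from $S(Y)$ being a deterministic function of $\mathbf{Y}$, to sandwich the two mutual informations into equality.
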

\begin{proof}
 From Lemma~\ref{lemma: (N,N)}, we know that $(X_0+\cdots+X_{k-1}) - (Y_0 + \cdots + Y_{k-1}) - \mathbf{Y}$ is a Markov Chain. Since $(Y_0 + \cdots + Y_{k-1})$ is a function of $\mathbf{Y}$, $\mathbf{}$ $(X_0 + \ldots + X_{k-1}) - \mathbf{Y} - (Y_0 + \cdots + Y_{k-1})$ is also a Markov chain. The corollary now follows from using the data processing inequality twice.
\end{proof}



\begin{prop}\label{prop: security_lemma}
     Given that all the secret shares leak information independently for every bit through BSC(q); the mutual information leakage from each bit of the secret in ShamirSS$(N, N)$ with $t'$ colluding users is bounded as 
    \begin{equation}
        I(S^i; \mathbf{Z}) \le \hspace{1mm} \delta^{2(\Tilde{N}-t')},
    \end{equation}
    where $S^i$ is the random variable characterizing the $i^{th}$ bit of the secret, $\Tilde{N}$ is defined in Remark~\ref{rem: tildeN_assumption}, and $\delta = 1 - 2 h^{-1}\left( 1- \epsilon \right)$.
\end{prop}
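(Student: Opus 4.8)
The plan is to mirror the structure of the proof of Proposition~\ref{prop: security_MGL}, but to replace Mrs.\ Gerber's Lemma (which forces all summands to be independent, hence a uniform secret) by Corollary~\ref{corollary: Markov_chain_MI}, whose hypothesis only requires that every $\Tilde{N}_i-1$ of the summand bits be mutually independent. First I would invoke Lemma~\ref{lemma: no_collusion} to pass to a scheme with no colluding users, so that it suffices to prove the bound with $\Tilde{N}$ replaced by $\Tilde{N}-t'$. Writing the $i$-th bit of the secret through its recovery equation as $S^i=\mathcal{u}_1+\cdots+\mathcal{u}_{\Tilde{N}_i}$, where each $\mathcal{u}_j$ is a bit of some share and $\Tilde{N}_i\ge\Tilde{N}$ (Remark~\ref{rem: tildeN_assumption}), I would let $Y_j$ be the BSC$(q)$ output of $\mathcal{u}_j$ and let $\mathbf{Y}=(Y_1,\dots,Y_{\Tilde{N}_i})$ be the corresponding block of the leakage.

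The key step is to verify the hypothesis of Lemma~\ref{lemma: (N,N)} for the summand bits, namely that they are uniform with every $\Tilde{N}_i-1$ of them mutually independent. This is precisely where the perfect secrecy of ShamirSS$(N,N)$ enters: any $N-1$ shares are jointly uniform and independent of the secret, so deleting any single summand leaves a collection of bits supported on fewer ``free'' coordinates, which are therefore uniform and mutually independent even though the full set carries the one constraint $\sum_j\mathcal{u}_j=S^i$. This weaker independence is exactly what distinguishes Case~2 from Case~1 and is the reason Corollary~\ref{corollary: Markov_chain_MI}, rather than Mrs.\ Gerber's Lemma, is the appropriate tool. With the hypothesis in hand, Corollary~\ref{corollary: Markov_chain_MI} collapses the vector leakage to a scalar,
\[
I(S^i;\mathbf{Y})=I\left(S^i;\,{\textstyle\sum_{j=1}^{\Tilde{N}_i}}Y_j\right).
\]

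It then remains a direct computation. Writing $Y_j=\mathcal{u}_j\oplus n_j$ with i.i.d.\ $n_j\sim\mathrm{Bernoulli}(q)$ and $q=h^{-1}(1-\epsilon)$ the (worst-case) crossover probability, the scalar $\sum_j Y_j=S^i\oplus N$, where $N=\bigoplus_j n_j$ is an effective noise with $\Pr[N=1]=\tfrac{1-\delta^{\Tilde{N}_i}}{2}$ and $\delta=1-2q$. Since $N$ is independent of $S^i$,
\[
I\left(S^i;\,S^i\oplus N\right)=H(S^i\oplus N)-h\!\left(\tfrac{1-\delta^{\Tilde{N}_i}}{2}\right)\le 1-h\!\left(\tfrac{1-\delta^{\Tilde{N}_i}}{2}\right)\le \delta^{2\Tilde{N}_i},
\]
using the same estimate $h\big(\tfrac{1-x}{2}\big)\ge 1-x^2$ already employed in Proposition~\ref{prop: security_MGL} with $x=\delta^{\Tilde{N}_i}$; monotonicity in $\Tilde{N}_i\ge\Tilde{N}$ together with the no-collusion reduction then yields $I(S^i;\mathbf{Z})\le\delta^{2(\Tilde{N}-t')}$.

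The step I expect to be the main obstacle is making the independence claim rigorous at the bit level for the general ShamirSS$(N,N)$ scheme: because field multiplication by the recovery coefficients $c_i$ mixes several bits of the same share into $S^i$, deleting one summand need not leave bits supported on $N-1$ distinct shares, so ``every $N-1$ shares independent'' does not transparently give ``every $\Tilde{N}_i-1$ summand bits independent.'' For the all-ones linear scheme of Remark~\ref{rem: choosing_V} this difficulty disappears, since each summand is a bit of a distinct share and $\Tilde{N}_i=N$, which is why that variant gives the cleanest bound; the general case requires controlling the $\mathbb{F}_2$-linear dependencies among the summand bits. A secondary subtlety is that $\mathbf{Z}$ also carries the leakage of bits outside the support of $S^i$, so one must argue that these extra coordinates do not increase $I(S^i;\mathbf{Z})$ beyond $I(S^i;\mathbf{Y})$; this is immediate once the relevant bits' leakage is shown to be a sufficient statistic for $S^i$, but needs care when the bits of the secret are themselves correlated.
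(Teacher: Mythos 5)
Your proposal follows essentially the same route as the paper's proof: reduce to no collusion via Lemma~\ref{lemma: no_collusion}, write the bit-wise recovery equation with $\Tilde{N}_i \ge \Tilde{N}$ summands, invoke Corollary~\ref{corollary: Markov_chain_MI} (justified by the $(N-1)$-wise uniformity and independence of the shares) to collapse the vector leakage to the scalar $\sum_j z_j$, and then bound $I(S^i;\sum_j z_j) \le 1 - h\bigl(\tfrac{1-\delta^{\Tilde{N}}}{2}\bigr) \le \delta^{2\Tilde{N}}$, which is exactly the paper's computation written in effective-noise form rather than $\star$-convolution notation. The two caveats you flag (bit-level independence under field multiplication, and the leakage coordinates outside the support of $S^i$) are real but are glossed over in the paper's proof as well, so they do not mark a divergence in approach.
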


\begin{proof}
Using Lemma~\ref{lemma: no_collusion}, we can work with a scheme without colluding users with $\Tilde{N} \to (\Tilde{N}-t')$. By the assumption on the number of non-zero summands in the $l$ bit equations, let $S^i = \mathcal{u}_1 + \cdots + \mathcal{u}_{\Tilde{N}_i}$ where $\Tilde{N}_i>\Tilde{N}$ is the number of non-zero summands in the equation for $S^i$, and $\mathcal{u}_j$, $j \in \{1, \cdots, \Tilde{N}_i \}$ is some digit of some secret share $\mathcal{U}_k$, $k \in \{1, \cdots, N \}$. Let the noise added to the bit $\mathcal{u}_j$ in the relevant secret share be $o_j$, i.e. $z_j = \mathcal{u}_j + o_j$. Since any $(N-1)$ secret shares are uniformly distributed and mutually independent, using Corollary~\ref{corollary: Markov_chain_MI} for each bit,
\begin{align*}
    I(S^i; \mathbf{Z})/l &= I(S; z_1 + \ldots+ z_{\Tilde{N}}) \\
    &= H(z_1+\ldots+z_{\Tilde{N}}) - H(o_1 + \ldots + o_{\Tilde{N}}) \\
    &\le H(h^{-1}(H(S^i)) \star q \star \cdots \star q) - H( q \star \cdots \star q) \\
    & \le 1 - h \left( \frac{1 - (1-2q)^{\Tilde{N}}}{2} \right) \\
    & \le (1-2q)^{2\Tilde{N}} \\
    & \le \delta ^{2\Tilde{N}},
\end{align*}
    where $q = h^{-1}(1-\epsilon)$ and $\delta = 1 - 2 h^{-1}\left( 1- \epsilon \right)$.

\end{proof}
\begin{cor}\label{cor: security_metrics}
    Since Proposition~\ref{prop: security_lemma} is valid for all distributions of the secret share, the bit-wise MIS-security metric can be characterized as $\eta_{MIS} \le \delta^{2(\Tilde{N}-t')}$. Using Theorem~\ref{thm: inequality_equivalence}, semantic and distinguishing security metrics for each bit of the secret can be bounded as $\eta_{SS} \le \eta_{DS} \le \sqrt{2}\delta^{\Tilde{N}-t'}$. For $\Tilde{N}>\kappa N$, where $\kappa<1$, the bit-wise security-metrics improve exponentially with the number of users.
\end{cor}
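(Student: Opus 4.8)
The plan is to chain the per-bit leakage bound of Proposition~\ref{prop: security_lemma} together with the operational definition of MIS-security (Definition~\ref{def: mutual_information_security}) and the metric comparison of Theorem~\ref{thm: inequality_equivalence}, and then read off the asymptotic behaviour. First I would identify the ``message'' in the MIS framework with a single secret bit $S^i$ and the adversary channel output with the leaked vector $\mathbf{Z}$, so that by definition $\eta_{MIS} = \max_{P_{S^i}} I(S^i;\mathbf{Z})$.

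The crucial observation is that the bound $I(S^i;\mathbf{Z}) \le \delta^{2(\Tilde{N}-t')}$ established in Proposition~\ref{prop: security_lemma} was derived under the hypothesis that the secret is \emph{arbitrarily} distributed (Case~2), with only the leakage channels constrained to be BSCs. Hence the inequality holds uniformly over every admissible input law $P_{S^i}$, and taking the maximum of both sides preserves it, giving $\eta_{MIS} = \max_{P_{S^i}} I(S^i;\mathbf{Z}) \le \delta^{2(\Tilde{N}-t')}$. This distribution-free character is precisely why the BSC-based Proposition~\ref{prop: security_lemma}, rather than the uniform-secret Proposition~\ref{prop: security_MGL}, must be invoked here.

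Next I would feed this into Theorem~\ref{thm: inequality_equivalence}, which asserts $\eta_{SS} \le \eta_{DS} \le \sqrt{2\eta_{MIS}}$. Substituting the bound and using $\delta \ge 0$ to simplify $\sqrt{2\delta^{2(\Tilde{N}-t')}} = \sqrt{2}\,\delta^{\Tilde{N}-t'}$ yields the stated $\eta_{SS} \le \eta_{DS} \le \sqrt{2}\,\delta^{\Tilde{N}-t'}$. For the final asymptotic claim I would note that the standing assumption $\epsilon < 1$ forces $h^{-1}(1-\epsilon) \in (0,\tfrac{1}{2})$, hence $0 < \delta < 1$; combined with $\Tilde{N} > \kappa N$ this gives $\delta^{2(\Tilde{N}-t')} \le \delta^{2(\kappa N - t')}$, which decays like $r^{N}$ with ratio $r = \delta^{2\kappa} < 1$, and the same geometric decay governs the $\sqrt{2}\,\delta^{\Tilde{N}-t'}$ bound on $\eta_{SS}$ and $\eta_{DS}$.

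The main obstacle is really the only nontrivial point: justifying the passage from a per-distribution estimate to a bound on the maximum in the first step. I would verify that every marginal law $P_{S^i}$ on a single bit can be realized as the $i$-th bit of some arbitrarily distributed secret $S$, so that the hypotheses of Proposition~\ref{prop: security_lemma} are satisfied for each candidate appearing in the maximization; only then does the supremum inherit the bound. Modulo this check, the rest is a direct substitution into Theorem~\ref{thm: inequality_equivalence} plus an elementary monotonicity argument for the exponential decay, needing only the sign and magnitude estimates $0 < \delta < 1$.
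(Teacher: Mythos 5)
Your proposal is correct and follows essentially the same route as the paper, which folds the argument into the corollary statement itself: Proposition~\ref{prop: security_lemma} holds for arbitrary secret distributions (hence bounds the maximum defining $\eta_{MIS}$), Theorem~\ref{thm: inequality_equivalence} then gives $\eta_{SS} \le \eta_{DS} \le \sqrt{2}\,\delta^{\Tilde{N}-t'}$, and $0<\delta<1$ yields the exponential decay. Your added checks (that every single-bit law arises as a marginal of some admissible secret distribution, and that $\epsilon<1$ forces $0<\delta<1$) are sound refinements of the same argument rather than a different approach.
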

\begin{rem}\label{rem: (N,N) security}
   If a linear secret sharing scheme is of the form described in Remark~\ref{rem: choosing_V}; the bit $S^i$ is the summation of $i^{th}$ bit of each secret share, and the security metrics can be bounded as $\eta_{MIS} \le \delta^{2(N-t')}$, and $\eta_{SS} \le \eta_{DS} \le \sqrt{2}\delta^{N-t'}$. 
\end{rem}

\section{Conclusion}
\label{sec:conc}
To examine the leakage resilience of secret sharing schemes in distributed systems, we proposed the \textit{statistical leakage} model. This is an information-theoretic model of leakage, characterized by the honest parties leaking information to an adversary through independent wiretap channels. We then study the leakage resilience of Shamir's secret sharing with statistical leakage. For ShamirSS$(N,N)$ scheme over fields of characteristic $2$, we show that the bit-wise mutual information security (MIS), and consequently,  the semantic security (SS) and distinguishing security (DS), improve exponentially with the number of users. The leakage model proposed in this work can be adapted to understand the leakage resilience of other protocols, such as for the recently proposed secret sharing in analog domain \cite{soleymani2022analog}, and for multi-user secret sharing \cite{soleymani2020distributed}.

There are several directions for future research. We believe that Lemma~\ref{lemma: (N,N)} can be generalized to all Abelian groups, which lets us generalize Proposition~\ref{prop: security_lemma} beyond bit-wise security. Proposition~\ref{prop: security_lemma} provides security guarantees when the leakage channels are BSCs; similar results should hold true for other channels. For ShamirSS$(N,t)$, the secret has $\binom{N}{t}$ secret recovery equations. By generalizing Lemma~\ref{lemma: (N,N)} to every subset of $t$ random variables being mutually independent for a secret $S$ satisfying a system of linear equations, it might be possible to improve upon the upper bound on the leakage provided in Proposition\,\ref{prop: general_security}, to grow exponentially with the threshold $t$.

\bibliographystyle{IEEEtran}
\bibliography{bibliography}
\end{document}